\documentclass[11pt]{article}
\usepackage{amssymb}
\usepackage{amsmath}
\usepackage{txfonts}
\usepackage{type1cm}
\setlength{\textheight}{23cm}
\setlength{\textwidth}{17cm}
\setlength{\topmargin}{-1.0cm}
\setlength{\oddsidemargin}{-0.5cm}
\setlength{\evensidemargin}{-0.5cm}

\pagestyle{empty}
\bibliographystyle{plain}

\newcommand{\calM}{\mathcal{M}}
\newcommand{\bfR}{\mathbf{R}}
\newtheorem{theorem}{Theorem}
\newtheorem{lemma}[theorem]{Lemma}
\newtheorem{corollary}[theorem]{Corollary}
\newtheorem{proposition}[theorem]{Proposition}
\newtheorem{problem}{Problem}

\newenvironment{proof}{%
  \noindent{\it Proof\ }}{%
  \hspace*{\fill}\qed
  \vspace{2ex}\\}
\newenvironment{proofsketch}{%
  \noindent{\it Proof (Sketch).\ }}{%
  \hspace*{\fill}\qed
  \vspace{2ex}}
\newcommand{\qed}{$\square$}
\newcommand{\bra}[1]{\langle #1 |}
\newcommand{\ket}[1]{| #1 \rangle}
\newcommand{\abs}[1]{\vert #1 \vert}
\newcommand{\ceil}[1]{\lceil #1 \rceil}
\newcommand{\fig}{Fig.}

\newcommand{\pqsubset}{\mbox{\boldmath $(p,q)$}{\bf -subset}}
\newcommand{\kclaw}{\text{\boldmath $k$}\text{\bf -claw}}
\newcommand{\claw}{\text{\bf claw}}
\begin{document}
\sloppy
\title{Claw Finding Algorithms Using Quantum Walk}
\author{Seiichiro Tani\thanks{tani@theory.brl.ntt.co.jp}\\
Quantum Computation and Information Project, ERATO-SORST, JST.\\
NTT Communication Science Laboratories, NTT Corporation.
}
\date{}
\maketitle              

\begin{abstract}
     The claw finding problem has been studied
in terms of query complexity
   as one of the problems closely
     connected to 
   cryptography.
   For given two functions, $f$ and $g$, as an oracle
   which have domains of
    size $N$ and $M$ $(N\leq M)$, respectively, and the same range,
   the goal of the problem 
   is to find $x$ and $y$ such that $f(x)=g(y)$.
    This problem has been considered in both quantum and classical
      settings
    in terms of query complexity.
   This paper describes an optimal algorithm using quantum walk 
that solves this problem.
Our algorithm can be slightly modified to 
solve a more general problem
of finding a tuple consisting of elements
in the two function domains 
that has prespecified property.
Our algorithm can also be generalized to find a
claw
   of $k$ functions for any constant integer $k>1$, where the domains of the
    functions may have different size. 
   \emph{Keywords: quantum computing, query complexity, oracle computation}
\end{abstract}

\section{Introduction}
 The most significant discovery in quantum computation
 would be Shor's polynomial-time quantum algorithms
 for factoring integers and computing discrete
 logarithms~\cite{Sho97SIComp}, both of which are believed to 
 be hard to solve in classical settings and are thus used in
 arguments for the security of
the widely used cryptosystems.
 Another significant discovery is Grover's quantum algorithm
 for the problem of searching an unstructured set~\cite{grover96},
i.e,
the problem of searching for $i\in \{0,1,\ldots ,N-1 \}$ such that $f(i)=1$
for a hidden Boolean function $f$; 
it has yielded a variety of 
generalizations~\cite{Bra00AMS,hoyer-mosca-dewolf03,ambainis07SICOMP,szegedy04,magniez-nayak-roland-santha07}.
Grover's algorithm and its generalizations assume the \emph{oracle computation model}, 
in which a problem instance is given as a black box
(called an oracle) and
any algorithm needs to make queries to the
black box to get sufficient information on the instance.
In the case of searching an unstructured set,
any algorithm needs to make queries of the form ``what is the value
 of function $f$ for input $i$ ?'' to the given oracle.
In the oracle computation model, 
the efficiency of an algorithm is usually measured by the number of queries
the algorithm needs to make, i.e., the query complexity of the
 algorithm. The query complexity of a problem means
the query complexity of the algorithm that solves the problem with
fewest queries.

One of the earliest applications of Grover's algorithm
was the bounded-error algorithm of Brassard, H{\o}yer and
Tapp~\cite{brassard-hoyer-tapp98};
it addressed the \emph{collision} problem
in a cryptographic context,
i.e., 
finding pair $(x,y)$ such that $f(x)=f(y)$,
in a given 2-to-1 function $f$ of domain size $N$.
Their quantum algorithm requires $O(N^{1/3})$ queries,
whereas any bounded-error classical algorithm needs $\Theta(N^{1/2})$ queries.
Subsequently, Aaronson and Shi~\cite{aaronson-shi04}
proved the matching lower bound.
Brassard et al.~\cite{brassard-hoyer-tapp98}
considered two more related problems: the \emph{element distinctness}
problem and the \emph{claw finding} problem.
These problems are also important
in a cryptographic sense.
Furthermore, studying these problems
has deepened our understanding of the power of 
quantum computation.

The element distinctness problem is
to decide whether or not $N$ integers given as an oracle are all distinct.
Buhrman et al.~\cite{buhrman-durr-heilgman-hoyer-magniez-santha-dewolf05} 
gave a bounded-error
algorithm for the problem, which makes $O(N^{3/4})$ queries
(strictly speaking, they assumed a comparison oracle,
which returns just the result of comparing function values for two
specified inputs,
and, in this case, 
the query complexity is $O(N^{3/4}\log N)$).
Subsequently, Ambainis~\cite{ambainis07SICOMP} gave an improved upper bound
$O(N^{2/3})$ by introducing a new framework of quantum walk
(his quantum walk algorithm was reviewed from a slightly more general
point of view in~\cite{magniez-santha-szegedy05,childs-eisenberg05},
and a much more general framework was given by Szegedy~\cite{szegedy04}).
This upper bound matches the lower bound proved by Aaronson and
Shi~\cite{aaronson-shi04}.

The \emph{claw
finding problem} is defined as follows.
Given two functions $f:X\rightarrow Z$ and $g:Y\rightarrow Z$ as an
oracle, 
decide whether or not there exists
at least one
pair $(x,y)\in X\times Y$, called a \emph{claw}, such that
$f(x)=g(y)$,
and \emph{find} a claw 
if it exists, 
where $X$ and $Y$ are domains of size $N$ and $M$ $(N\leq M)$,
respectively. By $\claw_\text{finding}(N,M)$, we mean this problem.

After Brassard et al.~\cite{brassard-hoyer-tapp98}
considered a special case of the claw finding problem,
Buhrman et
al.~\cite{buhrman-durr-heilgman-hoyer-magniez-santha-dewolf01} gave a
quantum algorithm
that requires
$O(N^{1/2}M^{1/4})$ queries for $N\leq M< N^2$
and $O(M^{1/2})$ queries for $M\geq N^2$ (strictly speaking, they assumed a comparison oracle,
and, in this case, the query complexity is multiplied by $\log N$).
They also proved that any algorithm requires
$\Omega(M^{1/2})$ queries
by reducing the search problem over an unstructured
set to the claw finding problem.
Thus, while their bounds of the query complexity are
tight when $M\geq N^2$, there is still a big gap when $N\leq M<N^2$.
Furthermore, they considered
the case of $k$ functions, i.e.,
the \emph{$k$-claw finding} problem defined as follows:
given $k$ functions $f_i:X_i:=\{1,\dots ,N_i \}\rightarrow Z\ (i\in \{1,\dots ,k\})$ 
as an oracle, where 
$k>1$ is any constant integer,
and $N_i\leq N_j$ if $i<j$,
decide whether or not there exists at least
one $k$-claw, i.e., a tuple
$(x_1,\dots , x_k)\in X_1\times \dots \times X_k$ such that
$f_i(x_i)=f_j(x_j)$ for any $i,j\in \{1,\dots ,k\}$, 
and find a $k$-claw if it exists.
A generalization of their algorithm works well
for the $k$-claw finding problem;
its query complexity is
$O(N^{1-1/2^k})$ if $N_i=N$ for all $i\in \{1,\dots ,k\}$.
It is shown in~\cite{magniez-santha-szegedy05} 
that the quantum-walk algorithm in~\cite{ambainis07SICOMP} for the element
distinctness problem 
is general enought to be applied with
slight modification to the $k$-claw finding problem;
this yields query complexity 
 $O((\sum _{i=1}^kN_i)^{\frac{k}{k+1}})$ 
 if the promise is assumed that there is at most one solution,
 and, with random reduction,
query complexity $\tilde{O}((\sum _{i=1}^kN_i)^{\frac{k}{k+1}})$ 
for the problem without the single-solution promise.
Zhang~\cite{zhang05COCOON} generalized the quantum-walk algorithm in~\cite{ambainis07SICOMP}
to solve the claw finding problem with the single-solution promise
by making
$O((NM)^{1/3})$ queries for $N\leq M<N^2$ and 
$O(M^{1/2})$ for $M\geq N^2$.
This upper bound is optimal, since the matching lower bound 
$\Omega((NM)^{1/3})$ was proved in the paper 
by reducing the collision problem to the claw finding problem.
Zhang also showed that the algorithm can be generalized to solve
a more general problem of finding a tuple consisting of elements
in the domains of given $k$ functions with the single-solution promise.
To solve the problems without the promise,
we usually use a randomly reduction to the problem \emph{with} the
single-solution promise, which is known to increase
the query complexity by at most a log factor
as pointed out in~\cite{magniez-santha-szegedy05}
(if the problem has certain robust properties, there is a random reduction 
that increases the query complexity by a constant multiplicative factor, e.g., \cite{ambainis07SICOMP}.)

This paper gives an optimal quantum algorithm 
that directly (i.e., without using such a random reduction) solves the claw finding problem without the
single-solution promise.
The query complexity of our algorithm is as follows:
\[
   Q(\claw_\text{finding}(N,M))=
     \begin{cases}
   O\left((NM)^{1/3}\right) & (N\leq M<N^2)\\
   O\left(M^{1/2}\right) & (M\geq N^2),
     \end{cases}
\]
where $Q(P)$ means the number of queries required to solve problem
$P$ with one-sided bounded error
(i.e., with the one-sided error probability bounded by a certain constant, say, 1/3).
The optimality is guaranteed by the lower bounds given in \cite{buhrman-durr-heilgman-hoyer-magniez-santha-dewolf01,zhang05COCOON}.
Our algorithm can be modified to solve a more general
problem
of finding
a tuple $(x_1,\dots ,x_p,y_1,\dots ,y_q)\in X^p\times Y^q$ 
such that
${x_i\neq x_j}$ and $y_i\neq y_j$ for any $i\neq j$, and 
$(f(x_1),\dots ,f(x_p),g(y_1),\dots ,g(y_q))\in R$,
for given $R\subseteq Z^{p+q}$, where $p$ and $q$ are positive
constant integers. We call this problem \emph{$(p,q)$-subset
  finding problem}
and denote it by $\pqsubset _{\rm finding}(N,M))$.
Thus, $\claw_\text{finding}(N,M)$ is a special case of 
$\pqsubset_\text{finding}(N,M))$ with $p=q=1$ and equality relation $R$.
The query complexity is
\[
Q(\pqsubset_\text{finding}(N,M))=
\begin{cases}
O((N^pM^q)^{1/(p+q+1)}) & {N\leq M<N^{1+1/q}}\\
O(M^{q/(1+q)}) & {M\geq N^{1+1/q}}.
\end{cases}
\]
Our claw finding algorithm first finds subsets $\tilde{X}\subseteq X$ and $\tilde{Y}\subseteq
Y$ of size $O(1)$ such that there is a claw in $\tilde{X}\times \tilde{Y}$, 
by
using binary and 4-ary searches over $X$ and $Y$; in order to
decide which branch we should proceed at each visited node in the
search trees, we use a subroutine that \emph{decides},
with one-sided bounded error,
whether or not there 
exists a claw of two functions $f$ and $g$.
The algorithm then searches $\tilde{X}\times \tilde{Y}$ for a claw
by making classical queries.
If we na\"{\i}vely repeated the bounded-error subroutine 
$O(\log M)$ times at each visited node
to guarantee bounded error as a whole,
a ``log'' factor would be multiplied to 
the total query complexity.
Instead, 
at the node of depth $s$ in the search trees,
we repeat the subroutine $O(s)$ times
to amplify success probability. This achieves bounded error as a
 whole,
while pushing up the query complexity by just a constant multiplicative
 factor.
This binary search technique can be used to solve other problems such
as the search version of 
the element distinctness problem, with the quantum walk algorithm for
the problems in \cite{szegedy04}.
(H{\o}yer et al.~\cite{hoyer-mosca-dewolf03}
introduced an error reduction technique with a similar flavor; however,
their technique is used in an algorithmic context different from ours:
their error reduction is performed at each recursion level
while ours is sequentially used at each step of the search tree.)

The subroutine is developed around the Szegedy's quantum walk
framework~\cite{szegedy04} over a Markov chain on the graph
categorical product of two Johnson graphs, which
correspond to the two functions (with an idea similar to the one used
in~\cite{buhrman-spalek06}). 
The \emph{Johnson graph} $J(n,k)$ is a connected regular graph with 
${n \choose k}$ vertices such that every vertex is a subset of size
$k$ of $[n]$; two vertices are adjacent if and only if 
the symmetric difference of their corresponding subsets has size 2.
For two functions $f$ and $g$ with domains $X$ and $Y$ such that
$|X|\leq |Y|$,
the subroutine applies Szegedy's quantum walk
to the graph categorical
product of two Johnson graphs $J_f=J(|X|,(|X||Y|)^{1/3})$ and $J_g=J(|Y|,(|X||Y|)^{1/3})$
if $|Y|\leq |X|^2$, and $J_f=J(|X|,|X|)$ and $J_g=J(|Y|,|X|)$ otherwise.

Our algorithm can be generalized to the $k$-claw
finding problem.
For the $k$-claw finding problem ${\kclaw _\text{finding}(N_1,\dots,N_k)}$ against the $k$ functions
with domain sizes $N_i\ (i=1,\dots, k)$, repectively,
\[
Q(\kclaw_\text{finding}(N_1,\dots,
N_k))=
  \begin{cases}
O\left(\left(\prod_{i=1}^k N_i\right)^\frac{1}{k+1}\right) & \mbox{if $\prod_{i=2}^{k}N_i=O(N_1^k)$},\\
O\left(\sqrt{\prod_{i=2}^k N_i/N_1^{k-2}}\right)
&\mbox{otherwise.}
  \end{cases}
\]

Our algorithms can work with slight modification
even 
against a comparison oracle 
(i.e., against an oracle that,
for a given pair of inputs $(x_i,x_j)\in X_i\times X_j$, only
decides which is the larger of two function values $f_i(x_i)$ and $f_j(x_j)$);
the query
complexity increases by a multiplicative factor of $\log N_1$ for
the $k$-function case ($\log N$ for the two-function case).

\subsection*{Related works}
Recently, Magniez et al.~\cite{magniez-nayak-roland-santha07}
developed a new quantum walk over a Markov chain.
One of the advantages of their quantum walk over Szegedy's quantum
walk is that their quantum walk can 
\emph{find} a marked vertex if there is at least one marked vertex,
which would simplify our algorithm.
Interestingly, 
our algorithm shows 
Szegedy's quantum walk 
together with carefully adjusted 
binary search
can find a solution in some interesting 
problems such as the claw finding problem and the element distinctness
problem with the same order of query complexity.

\section{Preliminaries}
This section defines problems and introduces some useful techniques.
We denote the set of positive integers by $\Bbb{Z}^*$,
the set of $\{ i\mid j\leq i \leq k \mbox{ for } i,j,k\in \Bbb{Z}^*
\}$ by $[j.k]$, and $[1.k]$ by $[k]$ for short.
\begin{problem}[Claw Finding Problem]
Given two functions $f:X:=[N]\rightarrow Z$ and $g:Y:=[M]\rightarrow Z$ as an oracle
for $N\leq M$, where
$Z=[\abs{Z}]$, 
find a
pair $(x,y)\in X\times Y$ such that $f(x)=g(y)$ 
if such a pair exists.
\end{problem}
Actually, $Z$ is allowed to be any totally ordered set,
but we adopt the above definition for simplicity.

In a quantum setting, the two functions are given as quantum oracle $O_{f,g}$ which is
defined as 
$
O_{f,g}: \ket{p,z,w}\longrightarrow \ket{p,z\oplus P(p)\pmod {|Z|},w},
$
where $p\in X\cup Y$, $z\in Z$, $w$ is work space, $P(p)$ is defined as $f(p)$ if $p\in
X$
and $g(p)$ if $p\in Y$
(note that it easy to know whether $p$ is in $X$ or $Y$ by
using one more bit to represent $p$).
This kind of oracle, which returns the 
value of the function(s), is called a \emph{standard oracle}.

Another type of oracle is called the \emph{comparison oracle}, which,
for given two inputs,
only decides which is the larger of the two function values
corresponding to the inputs.
More formally, comparison oracle $O_{f,g}$ is defined as
$
O_{f,g}: \ket{p,q,b,w}\longrightarrow \ket{p,q,b\oplus [P(p)\leq Q(q)],w},
$
where $p,q\in X\cup Y$, $b\in \{0,1\}$, $w$ and $P$ are defined as in the
standard oracle, $Q$ is defined in the same way as $P$,
and $[P(p)\leq Q(q)]$ is the predicate
such that its value is 1 if and only if $P(p)\leq Q(q)$.

It is obvious that, if we are given a standard oracle, we can realize
a comparison oracle by issuing $O(1)$ queries to the standard oracle.
Thus, upper bounds for a comparison oracle are those for a standard
oracle, and lower bounds for a standard oracle are those for a
comparison oracle, if we ignore constant multiplicative factors.

Buhrman et
al.~\cite{buhrman-durr-heilgman-hoyer-magniez-santha-dewolf01}
generalized the claw finding problem to a $k$-function case.
\begin{problem}[$k$-Claw Finding Problem]
Given $k$ functions $f_i:X_i:=[N_i]\rightarrow Z\ (i\in [k])$ 
as an oracle, where $N_i\leq N_j$ if $i<j$,
and $Z:=[|Z|]$,
find a $k$-claw, i.e., a $k$-tuple
$(x_1,\dots , x_k)\in X_1\times \dots \times X_k$ such that
$f_i(x_i)=f_j(x_j)$ for any $i,j\in [k]$, 
if it exists.
\end{problem}
Standard and comparison oracles are defined in almost the same way as
in the two-function case, except that
inputs $p$ and $q$ belong to one of $X_i$'s, respectively, for $i\in[k]$.

The next theorem describes Szegedy's framework, which we use 
to prove our upper bounds.
\begin{theorem}[\cite{szegedy04}]
\label{th:szegedy04}
  Let $\calM$ be a symmetric Markov chain with state set $V$ and transition
  matrix $P$ and let
  $\delta_{\calM}$ be the spectral gap of $P$,
i.e., $1-\max _i\abs{\lambda _i}$ for the eigenvalues $\lambda _i$'s of $P$. For a certain subset
  $V'\subseteq V$ with the promise that $|V'|$ is either 0
  or at least $\epsilon |V|$ for $0<\epsilon <1$,
any element in $V'$ is marked. 
For $T=O (1/\sqrt{\epsilon\delta_{\calM} })$, the next quantum
  algorithm decides whether $|V'|$ is 0 (``false'') or at least 
$\epsilon |V|$ (``true'') with one-sided bounded error with cost
$O(C_U+(C_F+C_W)/\sqrt{\delta_{\calM} \epsilon})$,
where $C=\sum _{i}\ket{c_i}\bra{c_i}\ $ for $\ket{c_i}=\sum
  _{j}\sqrt{P_{i,j}}\ket{i}\ket{j}$ and 
$R=\sum _{j}\ket{r_j}\bra{r_j}\ $ for $\ket{r_j}=\sum _{i}\sqrt{P_{j,i}}\ket{i}\ket{j}$:
  \begin{enumerate}
\setlength{\itemsep}{0mm}
\setlength{\parskip}{0mm}
\item Prepare $\ket{0}$ in a one-qubit register $\bfR_0$, and prepare
a uniform superposition 
$
\ket{\phi _0}:=\frac{1}{\sqrt{r|V|}}\sum_{i,j\in V,P_{i,j}\neq 0} \ket{i}\ket{j}
$ in a register $\bfR_1$
with cost at most $C_U$, where $r$ is the number of adjacent states
(of any state) in $\calM$.
\item Apply the Hadamard operator to $\bfR_0$.
  \item For randomly and uniformly chosen $1\leq t\leq  T$, apply the
  next operation $W$ $t$ times to $\bfR_1$ if the content of
  $\bfR_0$ is ``1.''
  \begin{enumerate}
  \item To any $\ket{i}\ket{j}$, perform the next steps:
    (i) Check if $i\in V'$ with cost at most $C_F$,
    (ii) If $i\not\in V'$, apply diffusion operator $2C-I$ with cost
    at most $C_W$.
  \item To any $\ket{i}\ket{j}$, perform the next steps:
    (i) Check if $j\in V'$ with cost at most $C_F$,
    (ii) If $j\not\in V'$, apply diffusion operator $2R-I$ with cost at most $C_W$.
  \end{enumerate}
\item Apply the Hadamard operator to $\bfR_0$, and measure
  registers $\bfR_0$
  and $\bfR_1$ with respect to the computational basis.
\item If the result of measuring $\bfR_0$ is 1 or a marked
  element is found by measuring $\bfR_1$, output ``true''; otherwise output ``false.''
  \end{enumerate}
\end{theorem}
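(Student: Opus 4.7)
The plan is to follow Szegedy's original spectral analysis of quantum walks defined by bipartite doubled Markov chains. First, I would rewrite the conditional diffusions in Step 3 as a single unitary $W=(2\Pi_R-I)(2\Pi_C-I)$, where $\Pi_C$ and $\Pi_R$ are truncations of $C$ and $R$ that act as the full projectors on the unmarked part of $V$ and as the identity on the marked part (this is exactly the effect of the ``check first, diffuse only if unmarked'' pattern in substeps (a) and (b)). The check costs $C_F$ per side and the diffusion costs $C_W$, so a single application of $W$ costs $O(C_F+C_W)$.

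Second, I would derive the spectrum of $W$ via the discriminant matrix $D$ with entries $D_{ij}=\sqrt{P_{ij}P_{ji}}$. Because $\calM$ is symmetric, $D=P$, so the singular values of $D$ coincide in absolute value with the eigenvalues of $P$; in particular, the largest non-trivial singular value is bounded away from $1$ by $\delta_\calM$. Szegedy's key spectral correspondence then states that the non-trivial phases occurring in the eigendecomposition of $W$ on the subspace containing $\ket{\phi_0}$ are exactly $\pm\arccos(\sigma_\ell)$ for the singular values $\sigma_\ell$ of the truncated discriminant obtained by deleting the rows/columns indexed by $V'$.

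Third, I would split into the two promised cases. If $|V'|=0$, no truncation occurs, $\ket{\phi_0}$ is a $+1$-eigenvector of $W$, and $W^t\ket{\phi_0}=\ket{\phi_0}$ for every $t$; the Hadamard-controlled application in Step 3 then refocuses $\bfR_0$ back to $\ket{0}$ and no marked element can appear in $\bfR_1$, so the algorithm outputs ``false'' with probability $1$ (this gives the one-sided error). If $|V'|\geq \epsilon|V|$, I would show that $\ket{\phi_0}$ has only a tiny overlap with the $+1$-eigenspace of the truncated $W$, and that its mass on the remaining eigenvectors lives at phases of size at least $\Omega(\sqrt{\epsilon\delta_\calM})$. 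A standard averaging argument over a uniformly random $t\in[1,T]$ with $T=\Theta(1/\sqrt{\epsilon\delta_\calM})$ then shows that $\frac{1}{T}\sum_t W^t\ket{\phi_0}$ has constant distance from $\ket{\phi_0}$, so the final Hadamard on $\bfR_0$ yields outcome $1$ with constant probability, or else a marked element is directly observed in $\bfR_1$.

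The main obstacle is the quantitative estimate in the second step: establishing that the relevant singular-value gap of the truncated discriminant is $\Omega(\sqrt{\epsilon\delta_\calM})$ rather than merely $\Omega(\epsilon\delta_\calM)$. This quadratic improvement is the heart of Szegedy's framework and requires an interlacing-type argument comparing the spectra of $D$ and its principal submatrix obtained by removing rows/columns in $V'$; combined with the lower bound $|V'|\geq\epsilon|V|$ on the size of the removed block, it delivers the advertised square-root speed-up. Once this is in place, the total cost $O\bigl(C_U+(C_F+C_W)/\sqrt{\delta_\calM\epsilon}\bigr)$ follows from the $C_U$ setup in Step 1 and the $T$ applications of $W$, each of cost $O(C_F+C_W)$.
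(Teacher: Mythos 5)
The paper does not prove this theorem; it is imported verbatim from Szegedy~\cite{szegedy04}, so there is no in-paper proof to compare against. Judged against Szegedy's actual argument, your sketch has the right overall shape: the walk operator as a product $(2\Pi_R-I)(2\Pi_C-I)$ of reflections truncated to the unmarked part, the discriminant matrix $D$ (equal to $P$ in the symmetric case), the spectral correspondence between singular values of the truncated discriminant $D_M$ and eigenphases of the walk, the clean ``no marked vertex'' case where $\ket{\phi_0}$ is fixed, and the averaging over a uniformly random $t\in[1,T]$.

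However, your identification of the main technical step mislocates the quadratic speedup. Szegedy's interlacing/variational argument establishes a \emph{singular-value} gap of $\Omega(\epsilon\delta_{\calM})$ for $D_M$, not $\Omega(\sqrt{\epsilon\delta_{\calM}})$ as you claim. The square root enters only afterwards, via the correspondence between a singular value $\sigma=\cos\theta$ and the walk eigenphases $\pm 2\theta$, together with $\arccos(1-x)=\Theta(\sqrt{x})$ for small $x$; this is what delivers the $\Omega(\sqrt{\epsilon\delta_{\calM}})$ \emph{phase} gap that makes $T=O(1/\sqrt{\epsilon\delta_{\calM}})$ suffice. Your own preceding sentence about the phase bound already contains this square root, so demanding it also at the singular-value level is both unnecessary and false: for the walk on the complete graph $K_n$ with one marked vertex one has $\delta_{\calM}=\Theta(1)$, $\epsilon=\Theta(1/n)$, and $1-\|D_M\|=\Theta(1/n)=\Theta(\epsilon\delta_{\calM})$, certainly not $\Theta(1/\sqrt{n})$. (You also write the phases as $\pm\arccos\sigma$ rather than $\pm 2\arccos\sigma$, which does not affect the asymptotics.) With the singular-value bound corrected to $\Omega(\epsilon\delta_{\calM})$, the remainder of your outline is sound and follows Szegedy's route.
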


\section{Claw Detection}
\label{sec:claw-detection}
In this section, we describe ``claw-detection'' algorithms
that detect the existence of a claw.
The claw-detection algorithms will be used as subroutines
in the ``claw-search'' algorithms presented in
the next section that find a claw.

Before presenting the claw-detection algorithm, we introduce some notions.
The \emph{Johnson graph} $J(n,k)$ is a connected regular graph with 
${n \choose k}$ vertices such that every vertex is a subset of size
$k$ of $[n]$; two vertices are adjacent if and only if 
the symmetric difference of their corresponding subsets has size 2.
The \emph{graph categorical product} $G=(V_G,E_G)$ of two graphs $G_1=(V_{G_1},E_{G_1})$ and $G_2=(V_{G_2},E_{G_2})$, denoted
by $G=G_1\times G_2$, is a graph having vertex set $V_G=V_{G_1}\times V_{G_2}$
such that $((v_1,v_2),(v_1',v_2'))\in E_G$ if and only if
$(v_1,v_1')\in E_{G_1}$ and $(v_2,v_2')\in E_{G_2}$.

The next two propositions are useful in analyzing the claw-detection algorithms we
will describe. 
\begin{proposition}
\label{pr:spectral gap of graph product}
For Markov 
chains $\calM$, $\calM  _1, \dots ,\calM _k$,
the spectral gap $\delta $ of $\calM$ is the minimum of those
$\delta _1,\dots ,\delta _k$ of $\calM _1,\dots ,\calM _k$,
i.e., $\delta=\min_i\{\delta_i\}$,
  if the underlying graph of $\calM$ is the graph categorical product of
  those of $\calM _1,\dots ,\calM _k$.
\end{proposition}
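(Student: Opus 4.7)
The plan is to realize the transition matrix $P$ of the product chain $\mathcal{M}$ as a Kronecker (tensor) product of the individual transition matrices and then invoke the well-known fact that the eigenvalues of a Kronecker product are products of eigenvalues of the factors. Concretely, by the definition of the graph categorical product, a transition $(v_1,\dots,v_k)\to (v_1',\dots,v_k')$ in $\mathcal{M}$ is possible exactly when each coordinate transition is; writing the natural uniform product measure on edges, the transition matrix of $\mathcal{M}$ factors as $P = P_1 \otimes \cdots \otimes P_k$, where $P_i$ is the transition matrix of $\mathcal{M}_i$.

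Next I would use the standard fact that the spectrum of $P_1\otimes\cdots\otimes P_k$ consists of all products $\lambda^{(1)}_{j_1}\lambda^{(2)}_{j_2}\cdots \lambda^{(k)}_{j_k}$, where $\{\lambda^{(i)}_j\}_j$ is the spectrum of $P_i$. Since each $\mathcal{M}_i$ is a symmetric (doubly stochastic) Markov chain, every $\lambda^{(i)}_j$ lies in $[-1,1]$ with $\lambda^{(i)}_1 = 1$, and by definition $\max\{|\lambda^{(i)}_j|\,:\,j\neq 1\} = 1-\delta_i$. The spectral gap of $P$ is then determined by the largest absolute value among the non-principal product eigenvalues.

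The computation is: for any non-principal eigenvalue of $P$, at least one index $i$ has $j_i\neq 1$, so the product has absolute value at most $|\lambda^{(i)}_{j_i}|\cdot \prod_{\ell\neq i}1 = 1-\delta_i \leq 1-\min_\ell\delta_\ell$. Conversely, this bound is attained by choosing $j_i$ so that $|\lambda^{(i)}_{j_i}|=1-\delta_i$ for the index $i$ achieving the minimum $\delta_i$, and $j_\ell = 1$ for all other $\ell$. Hence $\max\{|\mu|:\mu \text{ a non-principal eigenvalue of }P\} = 1-\min_i \delta_i$, yielding $\delta = \min_i \delta_i$.

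The only real subtlety is bookkeeping around the definition of ``spectral gap'' used in Theorem~\ref{th:szegedy04}: since the principal eigenvalue of any stochastic matrix is $1$, the quantity $1-\max_i|\lambda_i|$ must be understood as taken over non-principal eigenvalues (equivalently, as the gap between $1$ and the second-largest-in-absolute-value eigenvalue), and one must check that the principal eigenvalue of the product chain is the single product $1\cdot 1\cdots 1$ (which holds whenever each $\mathcal{M}_i$ has $1$ as a simple eigenvalue, i.e., is irreducible). Once that convention is fixed, the argument above is essentially the whole proof; the main obstacle is just verifying the factorization $P=P_1\otimes\cdots\otimes P_k$ under the specific weighting of the product chain used in the paper, which follows directly from the definition of the categorical product.
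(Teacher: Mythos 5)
The paper states Proposition~\ref{pr:spectral gap of graph product} without proof, treating it as a standard fact; your argument supplies the natural proof and it is correct. The key step is the observation that the simple random walk on a graph categorical product has transition matrix $P = P_1\otimes\cdots\otimes P_k$ (since $d(v_1,\dots,v_k)=\prod_i d_i(v_i)$ and adjacency in the product requires adjacency in every coordinate), so the spectrum of $P$ is exactly the set of products of factor eigenvalues, and the bound $\max\{|\mu|:\mu\ \text{non-principal}\}=1-\min_i\delta_i$ follows by the two-sided estimate you give. You are also right to flag that the paper's literal phrase ``$1-\max_i|\lambda_i|$'' must be read as excluding the trivial eigenvalue $1$, and that irreducibility of each $\calM_i$ is the hypothesis ensuring $1\cdot 1\cdots 1$ is the unique principal eigenvalue of the product chain; both are satisfied here because the underlying factors are connected Johnson graphs (with $2\le k\le n/2$, hence non-bipartite), so the formula is applied in a regime where it is unambiguous.
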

The eigenvalues of the Markov chain on $J(n,k)$ are
$\frac{(k-j)(n-k-j)-j}{k(n-k)}$ for $j\in [0.k]$~\cite[pages
255--256]{brouwer-cohen-neumaier89}, from which the next proposition follows.
\begin{proposition}
\label{pr: spectral gap of Johnson graph}
The Markov chain on Johnson graph $J(n,k)$ has spectral gap
$\delta=\Omega(\frac{1}{k})$, if ${\;2\leq k\leq n/2}$.
\end{proposition}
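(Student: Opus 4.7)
The plan is to work directly from the closed-form eigenvalue expression $\lambda_j = \frac{(k-j)(n-k-j)-j}{k(n-k)}$ for $j\in[0.k]$ cited from Brouwer--Cohen--Neumaier. Since the spectral gap equals $1-\max_{j\geq 1}\abs{\lambda_j}$ (the $j=0$ eigenvalue is the trivial one, $\lambda_0=1$), I need a lower bound of $\Omega(1/k)$ both on $1-\lambda_1$ (the distance from $1$ to the second-largest eigenvalue) and on $1-\abs{\lambda_k}$ (the distance from the most negative eigenvalue to $-1$), together with an argument that no intermediate index $j$ produces a worse extreme value.

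First, I would simplify the numerator by expanding: $(k-j)(n-k-j) = k(n-k) - j(n-j)$, so $(k-j)(n-k-j) - j = k(n-k) - j(n-j+1)$, giving
\[
\lambda_j \;=\; 1 - \frac{j(n-j+1)}{k(n-k)}.
\]
Viewed as a function of $j$, the quantity $j(n-j+1) = -j^2 + (n+1)j$ is a downward parabola with vertex at $j=(n+1)/2$. Because $0\leq j\leq k\leq n/2 < (n+1)/2$, it is strictly increasing on $[0,k]$, and therefore $\lambda_j$ is strictly decreasing in $j$ on this range. In particular, the second-largest eigenvalue is $\lambda_1$, and the smallest (most negative) eigenvalue is $\lambda_k$; no intermediate $j$ can give an $\abs{\lambda_j}$ beating both of these.

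Next, I would evaluate the two extremes explicitly. Plugging in $j=1$ gives
\[
1-\lambda_1 \;=\; \frac{n}{k(n-k)} \;\geq\; \frac{1}{k},
\]
using $n\geq n-k$. Plugging in $j=k$ gives $\lambda_k = 1 - \frac{k(n-k+1)}{k(n-k)} = -\frac{1}{n-k}$, and the hypothesis $k\leq n/2$ implies $n-k\geq k\geq 2$, so $\abs{\lambda_k}\leq 1/k\leq 1/2$ and hence $1-\abs{\lambda_k}\geq 1/2$. Combining, $\delta \;=\; 1 - \max\{\lambda_1,\abs{\lambda_k}\} \;\geq\; \min\{1/k,\,1/2\} \;\geq\; 1/k$ whenever $k\geq 2$, which is $\Omega(1/k)$ as claimed.

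The calculation is essentially routine algebra; there is no serious obstacle. The only point that needs care is the monotonicity step, since it is what reduces the maximum over all $j\geq 1$ to just comparing the two boundary values $\lambda_1$ and $\abs{\lambda_k}$. A sanity check in the boundary case $k=n/2$ (where the parabola's vertex sits just outside $[0,k]$) confirms that the strict-increase argument for $j(n-j+1)$ still goes through.
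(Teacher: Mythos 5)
Your proof is correct and fills in exactly the calculation the paper leaves implicit: the paper states the eigenvalue formula from Brouwer--Cohen--Neumaier and says the proposition "follows," and your rewriting $\lambda_j = 1 - j(n-j+1)/(k(n-k))$, the monotonicity observation, and the boundary evaluations at $j=1$ and $j=k$ are the natural way to make that inference rigorous. This is essentially the paper's approach, just carried out explicitly.
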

We will first describe a claw-detection algorithm against a comparison oracle, 
from which we can almost trivially obtain a claw-detection algorithm against a standard oracle.
Let Claw\_Detect denote the algorithm.
To construct Claw\_Detect,
we apply Theorem~\ref{th:szegedy04}
on the graph categorical
product of two Johnson graphs $J_f=J(|X|,l)$ and $J_g=J(|Y|,m)$
for the domains $X$ and $Y$ of functions $f$ and $g$, respectively,
where $l$ and $m$ $(l\leq m)$ are integers fixed later.

More precisely, 
let $F$ and $G$ be any vertices of $J_f$ and $J_g$, respectively, i.e.,
any $l$-element subset and $m$-element subset of
$X$ and $Y$, respectively.
Then $(F,G)$ is a vertex in $J_f\times J_g$.
Similarly, for any edges $(F,F')$ and $(G,G')$ of $J_f$ and $J_g$, respectively,
$((F,G),(F',G'))$ is an edge 
connecting two vertices $(F,G)$ and $(F',G')$ in $J_f\times J_g$.
We next define ``marked vertices'' as follows.
Vertex $(F,G)$ is marked if there is a pair of
$(x,y)\in F\times G$ such that $f(x)=g(y)$.
To check if $(F,G)$ is marked or not,
we just sort all elements in $F\cup G$ on their function values.
Although we have to sort all elements in the initial vertex,
we have only to change a small part of the sorted list we have already had
when moving to an adjacent vertex.
For every vertex $(F,G)$, we maintain a representation $L_{F,G}$ of the sorted list
of all elements in $F\cup G$ on their function values, 
and we identify $(F,G, L_{F,G})$ as a vertex of $J_f\times J_g$. 
Here, we want to guarantee that
$L_{F,G}$ is uniquely determined for any pair 
$(F,G)$ in order to avoid
undesirable quantum interference;
we have just to introduce some appropriate rules that break ties, i.e.,
the situation where there are multiple elements in $F\cup G$ that have the same
function value.

As the state $\ket{\phi_0}$ in Theorem~\ref{th:szegedy04}, 
we prepare
$$
\ket{\phi_0}=\frac{1}{\sqrt{{N\choose l}{M\choose
      m}l(N-l)m(M-m)}}
\bigotimes _{\stackrel{|F\triangle F'|=|G\triangle
      G'|=2}{\stackrel{F,F'\subseteq X,
      |F|=|F'|=l}{\stackrel{G,G'\subseteq Y,|G|=|G'|=m}{}}}}
\ket{F,G,L_{F,G}}
\ket{F',G',L_{F',G'}},
$$
in register $\bfR_1$.
The number $1\leq t\leq \frac{c}{\sqrt{\delta \epsilon}}$ of repeating
  $W$ is chosen randomly and uniformly
for some constant $c$, $\delta := \Omega(1/m)$
and $\epsilon := lm/(NM)$.

We next describe the implementation of operation $W$.
Since diffusion operator $2C-I$ depends on $L_{F,G}$'s, it cannot be performed
without queries to the oracle. We thus divide operator $2C-I$ into
a few steps. 
For every unmarked vertex $(F,G,L_{F,G})$, we first transform 
$\ket{F,G,L_{F,G}}\ket{F',G',L_{F',G'}}$
into $\ket{F,G,L_{F,G}}\ket{F',G',L_{F,G}}$ with queries to the oracle.
We then perform a diffusion operator on the registers 
where the contents ``$F,G$'' and ``$F',G'$'' are stored,
to obtain a superposition of $\ket{F,G,L_{F,G}}\ket{F'',G'',L_{F,G}}$
over all $(F'',G'')$ adjacent to $(F,G)$.
Finally, we transform 
$\ket{F,G,L_{F,G}}\ket{F'',G'',L_{F,G}}$
into $\ket{F,G,L_{F,G}}\ket{F'',G'',L_{F'',G''}}$.
Operator $2R-I$ can be implemented in a similar way.

\begin{lemma}
\label{lm:claw-detect}
Let $Q_2({\bf claw}_{\rm detect}(N,M))$ be the number of queries needed
to decide whether there is a claw or not 
for functions
$f:X:=[N]\rightarrow Z$ and $g:Y:=[M]\rightarrow Z$ given as a
comparison oracle. Then,
\[
Q_2({\bf claw}_{\rm detect}(N,M))=
\begin{cases}
O((NM)^{1/3}\log N) & (N\leq M<N^2)\\
O(M^{1/2}\log N) & (M\geq N^2).
\end{cases}
\]
\end{lemma}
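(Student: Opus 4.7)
The plan is to apply Szegedy's framework (Theorem~\ref{th:szegedy04}) to the Markov chain on the graph categorical product $J_f\times J_g$ described just before the statement, with parameters $(l,m)$ chosen to balance the dominant cost terms. Specifically, I would take $l=m=\lceil (NM)^{1/3}\rceil$ in the regime $N\le M<N^2$, and $l=N$, $m=N$ (so that $J_f$ degenerates to a single vertex $F=X$) in the regime $M\ge N^2$. In both cases I must compute three things: the fraction $\epsilon$ of marked vertices under the promise that at least one claw exists, the spectral gap $\delta$ of the product chain, and the per-step costs $C_U$, $C_F$, $C_W$.

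For the marked fraction, fix any claw $(x^*,y^*)$; then every vertex $(F,G)$ with $x^*\in F$ and $y^*\in G$ is marked, so $\epsilon\ge (l/N)(m/M)=lm/(NM)$ in the first regime, and $\epsilon\ge m/M=N/M$ in the second regime (where $F=X$ trivially contains $x^*$). For the spectral gap, Propositions~\ref{pr:spectral gap of graph product} and~\ref{pr: spectral gap of Johnson graph} give $\delta=\Omega(1/\max\{l,m\})=\Omega(1/m)$, so $1/\sqrt{\delta\epsilon}=O(\sqrt{NM/l})$ in the first regime and $O(\sqrt{M})$ in the second.

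The key technical step, and the main source of the $\log N$ factor, is implementing $C_U$, $C_F$, and $W$ against the comparison oracle while maintaining the auxiliary sorted list $L_{F,G}$. For $C_U$, preparing $\ket{\phi_0}$ requires sorting an initial $F\cup G$ on function values, costing $O((l+m)\log(l+m))=O((l+m)\log N)$ comparison queries. Checking whether $(F,G)$ is marked then amounts to scanning adjacent entries of $L_{F,G}$ for equal values, so $C_F=O(1)$ given the maintained list. Implementing one walk step on $J_f$ (and symmetrically on $J_g$) requires deleting one element and inserting one element into the sorted list, which by binary search costs $C_W=O(\log N)$ comparison queries; the three-phase decomposition sketched before the lemma (uncompute $L_{F',G'}$, apply the Johnson-graph diffusion on the plain $(F,G)$ registers, recompute the list) ensures this can be done coherently with a tie-breaking rule that keeps $L_{F,G}$ a function of $(F,G)$ only.

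Plugging everything into the cost formula $O(C_U+(C_F+C_W)/\sqrt{\delta\epsilon})$ yields, in the regime $N\le M<N^2$, a bound $O((NM)^{1/3}\log N+\sqrt{NM/l}\cdot \log N)$, which is $O((NM)^{1/3}\log N)$ for our choice $l=(NM)^{1/3}$; and in the regime $M\ge N^2$, a bound $O(N\log N+\sqrt{M}\log N)=O(\sqrt{M}\log N)$, matching the statement. The main obstacle I anticipate is the bookkeeping needed to argue that the modified $W$ really implements Szegedy's diffusion faithfully: the sorted lists must be uniquely determined by $(F,G)$ (hence the explicit tie-breaking), and the uncompute/recompute sandwich around the plain-vertex diffusion must leave no garbage entangled with the walk register, so that Theorem~\ref{th:szegedy04} applies verbatim with the stated costs.
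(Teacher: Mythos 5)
Your proposal follows essentially the same route as the paper's proof: apply Theorem~\ref{th:szegedy04} to the walk on $J_f\times J_g$, take $\epsilon=lm/(NM)$ from the single-claw case, take $\delta=\Omega(1/m)$ from Propositions~\ref{pr:spectral gap of graph product} and~\ref{pr: spectral gap of Johnson graph}, bound $C_U=O((l+m)\log(l+m))$, $C_F$ as negligible, and $C_W=O(\log(l+m))$ via binary-search insert/delete on the sorted list, and then set $l=m=\Theta((NM)^{1/3})$ when $N\le M<N^2$ and $l=m=N$ when $M\ge N^2$. The only cosmetic deviations --- $C_F=O(1)$ rather than the paper's $C_F=0$, and noting explicitly that $J(N,N)$ collapses to a single vertex --- do not change the calculation, so the argument is correct and matches the paper.
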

\begin{proof}
We will estimate $C_U$, $C_F$ and $C_W$ for Claw\_Detect,
and then apply Theorem~\ref{th:szegedy04}.

To generate $\ket{\phi _0}$, we first prepare the uniform
superposition of $\ket{F,G}\ket{F',G'}$ over all $F,F',G,G'$ such that
$(F,F')$ and $(G,G')$ are edges of $J_f$ and $J_g$, respectively.
Obviously, this requires no queries.  We then compute $L_{F,G}$ and
$L_{F',G'}$ for each basis state by issuing $O((l+m)\log (l+m))$
queries to oracle $O_{f,g}$. Thus, $C_U=O((l+m)\log (l+m))$.

We can check if there is a pair of 
$(x,y)\in F\times G$
such that $f(x)=g(y)$ by looking through $L_{F,G}$ (without any queries).
Thus, $C_F=0$.

For every unmarked $(F,G,L_{F,G})$,
step (a).ii of operation $W$ transforms
$\ket{F,G,L_{F,G}}\ket{F',G',L_{F',G'}}$ into a superposition
over all $\ket{F,G,L_{F,G}}\ket{F'',G'',L_{F'',G''}}$ such that
$|F\triangle F''|=|G\triangle G''|=2$.
This is realized by insertion and deletion of $O(1)$ elements
to/from the sorted list of $O(l+m)$ elements,
and diffusion operators without queries.
Each insertion or deletion can be performed 
with $O(\log (l+m))$ queries
by using binary search.
Similarly, step (b).ii of operation $W$ needs $O(\log (l+m))$ queries.
Thus, we have $C_W=O(\log (l+m))$.

We set $\epsilon$ to $\frac{l}{N}\times \frac{m}{M}$,
since the probability that a state is marked is minimized 
when only one claw exists for $f$ and $g$, in which case the
probability is $\frac{l}{N}\times \frac{m}{M}$.
Since, from Proposition~\ref{pr: spectral gap of Johnson graph}, 
the spectral gaps of the Markov chains on $J(N,l)$ and $J(M,m)$
are $\Omega(\frac{1}{l})$ and $\Omega(\frac{1}{m})$, respectively,
the spectral gap of the Markov chain on $J(N,l)\times J(M,m)$ 
is $\Omega(\min\{ \frac{1}{l}, \frac{1}{m}
\})=\Omega(\frac{1}{m})$ due to $l\leq m$ and
Proposition~\ref{pr:spectral gap of graph product}.

From Theorem~\ref{th:szegedy04}, the total number of queries is
$Q_2({\bf claw}_{\rm detect}(N,M))=O((l+m)\log(l+m)+\log(l+m)\sqrt{m(NM/(lm))})=O((l+m)\log(l+m)+\sqrt{NM/l}\log (l+m))$.

When $N\leq M<N^2$,
we set
$
l=m=\Theta((NM)^{1/3}),
$
which satisfies condition $l\leq N$.
The total number of queries is 
$
Q_2({\bf claw}_{\rm detect}(N,M))=O((NM)^{1/3}\log N).
$
When $M\geq N^2$, we set $l=m=N$,
implying that
$
Q_2({\bf claw}_{\rm detect}(N,M))=O(M^{1/2}\log N).
$
\qed
\end{proof}

The standard oracle case can be handled by using almost the same approach.
\begin{corollary}
Let $Q_2({\bf claw}_{\rm detect}(N,M))$ be the number of queries needed
to decide whether there is a claw or not 
for functions
$f:X=[N]\rightarrow Z$ and $g:Y=[M]\rightarrow Z$ given as a
standard oracle. Then,
\[
Q_2({\bf claw}_{\rm detect}(N,M))=
\begin{cases}
O((NM)^{1/3}) & (N\leq M<N^2)\\
O(M^{1/2}) & (M>N^2).\\
\end{cases}
\]
\label{cr:claw-detect}
\end{corollary}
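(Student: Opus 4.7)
The plan is to mirror the quantum-walk construction of Claw\_Detect from Lemma~\ref{lm:claw-detect}, using exactly the same categorical product of Johnson graphs $J_f=J(|X|,l)\times J_g=J(|Y|,m)$, the same definition of marked vertices $(F,G)$ (a collision exists in $F\times G$), and the same unique-representation discipline for the auxiliary data $L_{F,G}$ to keep the walk coherent. The only thing that changes is the cost accounting in Theorem~\ref{th:szegedy04}: with a standard oracle we obtain function values directly, so we never pay a $\log$ factor for comparison-based access to them. Accordingly, $L_{F,G}$ can be redefined to be, say, the canonical lexicographic list of pairs $(x,f(x))$ for $x\in F$ concatenated with $(y,g(y))$ for $y\in G$, sorted with a fixed tie-breaking rule; the sorting itself is internal classical work and costs no queries.

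First I would reestimate the three costs. For $C_U$, producing the uniform superposition of adjacent edge pairs requires no queries, and computing $L_{F,G}$ and $L_{F',G'}$ for each basis state needs only $O(l+m)$ standard-oracle queries (one per element) rather than $O((l+m)\log(l+m))$, because sorting on the queried values is classical. Hence $C_U=O(l+m)$. For $C_F$, a marked vertex is recognized by a local collision scan of the stored function values, so $C_F=0$ as before. For $C_W$, each walk step transforms $(F,G,L_{F,G})\ket{F',G',L_{F',G'}}$ to an adjacent $(F'',G'',L_{F'',G''})$ by substituting one element in either $F$ or $G$; the new element's function value is obtained with a single standard-oracle query, and the canonical list is then updated classically, giving $C_W=O(1)$.

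Next I would plug these costs into Theorem~\ref{th:szegedy04} with the same parameters as in Lemma~\ref{lm:claw-detect}: $\epsilon=lm/(NM)$ by the single-claw worst case, and $\delta=\Omega(1/m)$ from Propositions~\ref{pr:spectral gap of graph product} and~\ref{pr: spectral gap of Johnson graph} together with $l\leq m$. The total query count becomes
\[
Q_2(\claw_{\rm detect}(N,M)) \;=\; O\!\left(C_U+\frac{C_F+C_W}{\sqrt{\delta\epsilon}}\right) \;=\; O\!\left(l+m+\sqrt{NM/l}\right).
\]
Setting $l=m=\Theta((NM)^{1/3})$ when $N\leq M<N^2$ (which satisfies $l\leq N/2$ in this regime) yields $O((NM)^{1/3})$, and setting $l=m=N$ when $M\geq N^2$ yields $O(N+\sqrt{M})=O(\sqrt{M})$, since $\sqrt{M}\geq N$ in that regime. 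These are exactly the claimed bounds.

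The only potential obstacle is coherence: if the update of $L_{F,G}$ were ambiguous between adjacent vertices, Szegedy's diffusion step would suffer destructive interference. However, because the standard oracle returns numerical values, a fixed canonical form (e.g., lexicographic on $(f(x),x)$) can be maintained deterministically in work space without further queries, so each vertex has a unique classical record. Consequently the algorithm is identical in structure to Claw\_Detect and the analysis goes through verbatim with the tighter per-query bounds above.
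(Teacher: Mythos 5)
Your proposal is correct and follows essentially the same route the paper intends (the paper simply remarks that the standard-oracle case is handled ``by almost the same approach''), and your cost accounting $C_U=O(l+m)$, $C_F=0$, $C_W=O(1)$, combined with $\delta=\Omega(1/m)$ and $\epsilon=lm/(NM)$, reproduces the intended bounds via Theorem~\ref{th:szegedy04}. One minor remark: for the regime $N\leq M<N^2$ the condition needed from Proposition~\ref{pr: spectral gap of Johnson graph} is $l\leq N/2$, which forces the constant in $l=\Theta((NM)^{1/3})$ to be chosen small enough; you note this constraint but it does not actually hold for all $M<N^2$ unless the implicit $\Theta$-constant is taken appropriately (as the paper also implicitly does).
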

The claw-detection algorithm against a standard oracle can easily be modified in order to solve the more general
problem
of \emph{detecting}
a tuple $(x_1,\dots ,x_p,y_1,\dots ,y_q)\in X^p\times Y^q$ 
such that
$x_i\neq x_j$ and $y_i\neq y_j$ for any $i\neq j$, and 
$(f(x_1),\dots ,f(x_p),g(y_1),\dots ,g(y_q))\in R$,
for given $R\subseteq Z^{p+q}$,
where $p$ and $q$ are any constant positive integers.
A modification is made to
the part of the algorithm that
decides whether a vertex of the underlying graph is marked or not; the
modification can be made without changing the number of queries.
The query complexity can be analyzed by using almost the same approach
as used in claw detection
with $\epsilon ={N-p \choose l-p}/{N \choose l}\times {M-q \choose m-q}/{M \choose
  m}\geq l^pm^q(1-o(1))/(N^pM^q) $;
the query complexity is
$O((N^pM^q)^{1/(p+q+1)})$ for $N\leq M<N^{1+1/q}$ and
$O(M^{q/(1+q)})$ for $M\geq N^{1+1/q}$.
The problem of \emph{finding} such a tuple can also be solved
with the same order of complexity as above
by using the algorithm for 
\emph{detecting} it as a subroutine.

Our algorithm for detecting a claw can easily be generalized to the
case of $k$ functions of domains of size $N_1, \dots ,N_k$,
respectively. More concretely, we apply Theorem~\ref{th:szegedy04} to the Markov chain
on the graph categorical product of the $k$ Johnson graphs, each of
which corresponds to one of the $k$ functions.
We denote this ``$k$-claw detection'' algorithm by
$k$-Claw\_Detect
in the next section.

\begin{lemma}
\label{lm:k-claw-detect}
For any positive integer $k>1$, 
let $Q_2({\bf k\mbox{-}claw}_{\rm detect}(N_1,\dots, N_k))$ 
be the number of queries
needed to decide whether there is a $k$-claw or not 
for functions
$f_i:X_i:=[N_i]\rightarrow Z\ (i\in [k])$ given as a
comparison oracle, where $N_i\leq N_j$ if $i<j$.
If $k$ is constant,
\[
Q_2({\bf k\mbox{-}claw}_{\rm detect}(N_1,\dots,N_k))=
\left\{
  \begin{array}{ll}
O\left(\left(\prod_{i=1}^k N_i\right)^\frac{1}{k+1}\log N_1\right) & \mbox{if $\prod_{i=2}^{k}N_i=O(N_1^k)$},\\
O\left(\sqrt{{\prod_{i=2}^k N_i}/{N_1^{k-2}}}\log N_1\right)
&\mbox{otherwise.}
  \end{array}
\right.
\]
\end{lemma}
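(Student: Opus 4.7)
The plan is to repeat the construction from the proof of Lemma~\ref{lm:claw-detect} with $k$ Johnson graphs in place of two. Run Szegedy's framework (Theorem~\ref{th:szegedy04}) on the Markov chain whose underlying graph is the categorical product $J(N_1,l_1)\times \cdots \times J(N_k,l_k)$, with sizes $l_i\le N_i/2$ to be fixed later. A vertex $(F_1,\dots ,F_k)$ is a tuple of subsets $F_i\subseteq X_i$ with $|F_i|=l_i$, and is marked iff $F_1\times \cdots \times F_k$ contains a $k$-claw. As in the two-function case, every vertex carries an auxiliary sorted list $L_{F_1,\dots ,F_k}$ of the elements of $F_1\cup \cdots \cup F_k$ ordered by their function values (with some fixed tie-breaking rule so that $L_{F_1,\dots ,F_k}$ is unique and quantum interference is not destroyed). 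The list is maintained incrementally along each edge, which changes only one element in one component.

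Next I would estimate the quantities entering Theorem~\ref{th:szegedy04}. Preparing the uniform superposition and computing each $L$ costs $C_U=O((\sum_i l_i)\log (\sum_i l_i))$ queries to the comparison oracle; checking markedness is free from the sorted list, so $C_F=0$; and one walk step requires $O(1)$ insertions/deletions in the sorted list via binary search, giving $C_W=O(\log (\sum_i l_i))$. The worst-case marked fraction (single $k$-claw) is $\epsilon=\prod_i l_i/N_i$. By Proposition~\ref{pr: spectral gap of Johnson graph} each factor chain on $J(N_i,l_i)$ has spectral gap $\Omega(1/l_i)$, so Proposition~\ref{pr:spectral gap of graph product} gives $\delta=\Omega(1/\max_i l_i)$. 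Plugging into Theorem~\ref{th:szegedy04} yields total query count
\[
O\!\left(\Bigl(\textstyle\sum_i l_i\Bigr)\log\Bigl(\textstyle\sum_i l_i\Bigr)+\log\Bigl(\textstyle\sum_i l_i\Bigr)\sqrt{\max_i l_i\cdot \prod_i N_i/\prod_i l_i}\right).
\]

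The last step is to optimize the $l_i$. Since the quantity under the square root is minimized for balanced $l_i$ and the $\sum_i l_i$ term is a symmetric linear cost, I would take $l_i=l$ for all $i$, subject to $l\le N_1/2$; using $k=O(1)$, the cost simplifies up to log factors to $l + \sqrt{\prod_i N_i/l^{k-1}}$. Balancing the two terms gives $l=\Theta((\prod_i N_i)^{1/(k+1)})$. If this $l$ already satisfies $l\le N_1/2$, equivalently $\prod_{i=2}^k N_i=O(N_1^k)$, then the bound is $O((\prod_i N_i)^{1/(k+1)}\log N_1)$, the first branch of the lemma. Otherwise I would cap $l=\Theta(N_1)$; then $l\le \sqrt{\prod_i N_i/l^{k-1}}$, so the square-root term dominates, and substituting yields $O(\sqrt{\prod_{i=2}^k N_i/N_1^{k-2}}\,\log N_1)$, the second branch.

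The main obstacles are mostly bookkeeping rather than new ideas: one must verify that the fixed tie-breaking rule keeps $L_{F_1,\dots ,F_k}$ well-defined across all $k$ components so incremental updates preserve uniqueness (and hence avoid destructive interference in the walk), and one must check the case split on $l$ carefully, noting that $\log(\sum_i l_i)=O(\log N_1)$ in both regimes because $l=O(N_1)$ and $k$ is constant. The reliance on $k$ being constant is essential: it hides factors of $k$ inside the big-$O$ and allows the simple symmetric choice $l_i=l$ to be within a constant factor of optimal.
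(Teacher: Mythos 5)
Your proposal is correct and follows essentially the same route as the paper's proof sketch: the same $k$-fold categorical product of Johnson graphs, the same cost estimates $C_U$, $C_F$, $C_W$, the same $\epsilon=\prod_i l_i/N_i$ and $\delta=\Omega(1/\max_i l_i)$, and the same choices $l_i=\Theta((\prod_i N_i)^{1/(k+1)})$ when $\prod_{i=2}^k N_i=O(N_1^k)$ and $l_i=\Theta(N_1)$ otherwise. The only cosmetic difference is that you make the balancing/case-split reasoning a bit more explicit than the paper does.
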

\begin{proofsketch}
  In a way similar to the case of two functions,
we apply Theorem~\ref{th:szegedy04} on the graph categorical
product of $k$ Johnson graphs $J_{f_i}:=J(|X_i|,l_i)\ (i\in [k])$
for the domains $X_i$'s of functions $f_i$'s,
where $l_i$'s are integers fixed later such that $l_i\leq l_j$ for $i<j$.

To generate $\ket{\phi _0}$, we first prepare the uniform
superposition of $\ket{F_1,\dots, F_k}\ket{F_1',\dots,F_k'}$ over
all $F_i$ and $F'_i$ such that $(F_i,F'_i)$ is an edge of $J_{f_i}$
for every $i$.  This requires no queries.  
As in the case of two
functions,
define $L_{F_1,\dots,F_k}$ for any $F_1,\dots , F_k$ 
as a representation 
of the sorted list of all elements in $\bigcup _{i=1}^{k}F_i$
so that it can be uniquely
determined for each tuple $(F_1,\dots, F_k)$.
We then compute $L_{F_1,\dots, F_k}$
and $L_{F'_1,\dots, F'_k}$ for each basis state by issuing
$O\left((l_1+\dots +l_k)\log (l_1+\dots +l_k)\right)$ queries to the oracle.
Thus, $C_U=O\left((l_1+\dots +l_k)\log (l_1+\dots +l_k)\right)$.  $C_F$ and
$C_W$ can be estimated as 0 and $O\left(\log (l_1+\dots +l_k)\right)$,
respectively, in a way similar to the case of two functions.
We set $\epsilon$ to $\prod _{i=1}^{k}{l_i}/{N_i}$ and
$\delta$ to $\min_i \{ {1}/{l_i}\}={1}/{l_k}$.

When $\prod_{i=2}^kN_i=O(N_1^k)$,
we set
$l_i:=\Theta\left(\left(\prod_{i=1}^kN_i\right)^{\frac{1}{k+1}}\right)$
for every $i\in [k]$,
which satisfies condition $l_i\leq N_1\leq N_i$ for every $i\in [k]$.
When $\prod_{i=2}^kN_i=\Omega(N_1^k)$, we set 
$l_i:=\Theta(N_1)$ for every $i\in [k]$.
\end{proofsketch}
Against a standard oracle, we obtain a similar result.
\begin{corollary}
\label{cr:k-claw-detect}
For any positive integer $k>1$, let $Q_2({\bf k\mbox{-}claw}_{\rm detect}(N_1,\dots, N_k))$ 
be the number of queries needed
to decide whether there is a $k$-claw or not 
for functions
$f_i:X_i:=[N_i]\rightarrow Z$ $(i\in [k])$ given as a
standard oracle, where $N_i\leq N_j$ if $i<j$.
If $k$ is constant,
\[
Q_2({\bf k\mbox{-}claw}_{\rm detect}(N_1,\dots,
N_k))=
\left\{
  \begin{array}{ll}
O\left(\left(\prod_{i=1}^k N_i\right)^\frac{1}{k+1}\right) & \mbox{if $\prod_{i=2}^{k}N_i=O(N_1^k)$},\\
O\left(\sqrt{{\prod_{i=2}^k N_i}/{N_1^{k-2}}}\right)
&\mbox{otherwise.}
  \end{array}
\right.
\]
\end{corollary}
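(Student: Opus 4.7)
The plan is to follow the proof sketch of Lemma \ref{lm:k-claw-detect} almost verbatim, keeping the same underlying Markov chain (the graph categorical product of $J_{f_i}=J(N_i,l_i)$ for $i\in [k]$), the same choice of the parameters $l_i$, the same marked set, and the same values of $\epsilon=\prod_{i=1}^k l_i/N_i$ and $\delta=\Omega(1/l_k)$. The only difference is how the sorted list $L_{F_1,\dots,F_k}$ attached to each basis state is constructed and maintained; it is in precisely this place that the comparison-oracle proof paid a $\log$-factor overhead, so the job is to show that a standard oracle eliminates that overhead.

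The key observation is that a standard oracle returns $f_i(x)$ with a single query, so sorting a set of elements by function value reduces to (i) one query per element to fetch its value, followed by (ii) a purely classical (and reversible) sort inside the quantum register that uses no further queries. Accordingly, I would set $C_U=O(l_1+\dots+l_k)$, since preparing $\ket{\phi_0}$ requires querying each of the $O(l_1+\dots+l_k)$ elements in the initial tuple once; $C_F=0$, since the marked test only inspects the precomputed list; and $C_W=O(1)$, since a single walk step along an edge of the categorical product modifies at most $O(1)$ elements across the $F_i$'s, each requiring just one standard-oracle query to fetch its value before the classical resort. The tie-breaking convention from the comparison-oracle proof carries over unchanged and keeps $L_{F_1,\dots,F_k}$ a deterministic function of $(F_1,\dots,F_k)$, so that identical vertices interfere correctly.

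Plugging these estimates into Theorem \ref{th:szegedy04} gives total cost
\[
O\!\left((l_1+\dots+l_k)+\sqrt{l_k\cdot\prod_{i=1}^k N_i/l_i}\right),
\]
and substituting the two parameter regimes from the sketch of Lemma \ref{lm:k-claw-detect} — namely $l_i=\Theta\bigl((\prod_{i=1}^k N_i)^{1/(k+1)}\bigr)$ when $\prod_{i=2}^k N_i=O(N_1^k)$, and $l_i=\Theta(N_1)$ otherwise — yields exactly the claimed bounds without the $\log N_1$ factor. The only step that needs genuine care, rather than mere copying, is verifying that the classical sort performed inside the quantum register is implemented as a reversible permutation (which it is, since sorting a multiset with a fixed tie-breaking rule is an invertible operation on the computational basis); no new spectral or combinatorial analysis is required, so I do not expect a serious obstacle.
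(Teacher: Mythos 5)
Your proposal is correct and matches the modification the paper intends (but leaves unstated) when it says ``Against a standard oracle, we obtain a similar result'': the standard oracle fetches each $f_i(x)$ in one query, so the $\log(l_1+\dots+l_k)$ overhead in $C_U$ and $C_W$ from comparison-based sorting and binary-search insertion disappears, giving $C_U=O(l_1+\dots+l_k)$ and $C_W=O(1)$ for constant $k$, and the rest of the calculation carries over verbatim. Your closing remark about reversibility of the in-register sort under a fixed tie-breaking rule is a correct and worthwhile point to make explicit.
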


\section{Claw Finding}
\label{sec:claw-finding}
We now describe an algorithm, Claw\_Search, that finds a claw.
The algorithm consists of three stages.
In the first stage,
we find an $O(N)$-sized subset $Y'$ of $Y$
such that there is a claw in $X\times Y'$,
by performing binary search over $Y$ with Claw\_Detect.
In the second stage, we perform $4$-ary search 
over $X$ and $Y'$ with Claw\_Detect
to find $O(1)$-sized subsets $X''$ and $Y''$ of $X$ and $Y'$, respectively,
such that there is a claw in $X''\times Y''$.
In the final stage, we search $X''\times Y''$ for a claw 
by issuing classical queries.
To keep the error rate moderate, say, at most $1/3$, 
Claw\_Detect is repeated $O(s)$ times 
against the same pair of domains at the $s$th node of the
search tree at each stage.
This pushes up the query complexity by only a constant multiplicative
factor.

Figure~\ref{fig:claw search} precisely describes Claw\_Search.
Steps 2, 3 and 4 in the figure correspond to the first, second and
final stages, respectively.
\begin{figure}[htb]
\begin{center}
\hrulefill\\
  \textbf{Algorithm Claw\_Search}
\vspace{-2mm}
\begin{description}
  \item[Input:] Integers $M$ and $N$ such that $M\geq N$;
Comparison oracle $O_{f,g}$ for
    functions
$f:X\rightarrow Z$ and $g:Y\rightarrow Z$, respectively,  such
that $X:=[N]$ and $Y:=[M]$.\\
  \item[Output:] Claw pair $(x,y)\in X\times Y$ such that $f(x)=g(y)$ if
    such a pair exists; otherwise $(-1,-1)$.
\end{description}
\vspace{-5mm}
\begin{enumerate}
\item Set $\tilde{X}:=X$ and $\tilde{Y}:=Y$.
  \item  Set $s:=1$, and repeat the next steps until $u_{\tilde{Y}}-l_{\tilde{Y}}\leq |\tilde{X}|$, where $u_{\tilde{Y}}$ and $l_{\tilde{Y}}$ are the largest and
  smallest values, respectively, in $\tilde{Y}$.
  \begin{enumerate}
   \item Set $\Xi_Y:=\{ [l_{\tilde{Y}}.m_{\tilde{Y}}-1],[m_{\tilde{Y}}.u_{\tilde{Y}}]\}$, where $m_{\tilde{Y}}:=\lceil (l_{\tilde{Y}}+u_{\tilde{Y}})/2\rceil$.
   \item For every $\tilde{Y}'\in \Xi_Y$, do the following.\\
     If all $\tilde{Y}'\in \Xi_Y$ are examined, output $(-1,-1)$ and halt.
     \begin{enumerate}
     \item Apply Claw\_Detect $(s+2)$ times to
       $f$ and $g$ restricted to domains $\tilde{X}$ and
       $\tilde{Y}$, respectively. \\ 
     \item If at least one of the $(s+2)$ results is ``true,'' set $\tilde{Y}:=\tilde{Y}'$, and break (leave (b)).
     \end{enumerate}
   \item Set $s:=s+1$.
  \end{enumerate}
  \item Set $s:=1$, and repeat the next steps until $u_D-l_D\leq c$
  for every $D\in \{\tilde{X},\tilde{Y}\}$ and some constant $c$, say, 100, where $u_D$ and $l_D$ are the largest and
  smallest values, respectively, in $D$.
  \begin{enumerate}
   \item For every $D\in \{\tilde{X},\tilde{Y}\}$, set $\Xi_D:=\{ [l_D.u_D]\}$ if
     $u_D-l_D\leq c$, and\\ otherwise, set $\Xi_D:=\{ [l_D.m_D-1],[m_D.u_D]\}$ where $m_D:=\lceil (l_D+u_D)/2\rceil$.
   \item For every pair $(\tilde{X}',\tilde{Y}')\in
     \Xi_{\tilde{X}}\times \Xi_{\tilde{Y}}$, do the following.\\
     If all the pairs are examined, output $(-1,-1)$ and halt.
     \begin{enumerate}
     \item Apply Claw\_Detect $(s+3)$ times to
       $f$ and $g$ restricted to domains $\tilde{X}'$ and
       $\tilde{Y}'$, respectively. \\ 
     \item If at least one of the $(s+3)$ results is ``true,'' set $\tilde{X}:=\tilde{X}'$ and $\tilde{Y}:=\tilde{Y}'$, and break (leave (b)).
     \end{enumerate}
   \item Set $s:=s+1$.
  \end{enumerate}
\item Classically search $\tilde {X}\times \tilde {Y}$ for a claw.
\item Output claw $(x,y)\in \tilde {X}\times \tilde{Y}$ if it exists; otherwise output $(-1,-1)$.
\end{enumerate}
\vspace{-1.5\baselineskip}
\hrulefill
\end{center}
\vspace{-1.5\baselineskip}
\caption{Algorithm Claw\_Search}
\label{fig:claw search}
\vspace{-1\baselineskip}
\end{figure}
\begin{theorem}
\label{th:upper bound of claw}
Let $Q_2({\bf claw}_{\rm finding}(N,M))$ be the number of queries
needed to locate a claw if it exists
for functions
$f:X=[N]\rightarrow Z$ and $g:Y=[M]\rightarrow Z$ given as a
comparison oracle. Then,
\[
Q_2({\bf claw}_{\rm finding}(N,M))=
\begin{cases}
O\left((NM)^{1/3}\log N\right) & N\leq M <N^2\\
O(M^{1/2}\log N) & M\geq N^2.
\end{cases}
\]
\end{theorem}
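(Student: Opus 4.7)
The plan is to analyze Algorithm Claw\_Search in three pieces: establish correctness via the one-sided error of Claw\_Detect, bound the total query cost stage by stage, and then combine.

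For correctness, I would exploit the fact that Claw\_Detect has one-sided bounded error (Lemma~\ref{lm:claw-detect}): it never outputs ``true'' on a restricted pair of domains containing no claw. Consequently the algorithm can fail only by repeatedly (all $s+2$ or $s+3$ invocations) returning ``false'' at the unique branch at depth $s$ of the binary/$4$-ary search tree that actually contains a claw. The amplification reduces the probability of such a miss at depth $s$ to at most $\alpha^{s+2}$ for some constant $\alpha<1$. Summing over $s\ge 1$ (at most $O(\log M)$ terms) yields a geometric series bounded by a constant strictly less than $1/3$, giving bounded error overall.

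For complexity, let $q(N',M')$ denote one invocation cost from Lemma~\ref{lm:claw-detect}. At depth $s$ of either stage only $O(1)$ child regions are tested, each with $O(s)$ amplifying repetitions, so the cost of depth $s$ is $O(s)\cdot q(|\tilde X_s|,|\tilde Y_s|)$. In stage 1, $|\tilde X|=N$ while $|\tilde Y_s|=M/2^{s-1}$; I would split the sum at the boundary $|\tilde Y_s|=N^2$ where the formula of Lemma~\ref{lm:claw-detect} changes. Within each subphase the per-call cost decays geometrically in $s$, so $\sum_s s\cdot q(N,|\tilde Y_s|)$ is a convergent arithmetic-geometric sum. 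In stage 2 both $|\tilde X|$ and $|\tilde Y|$ halve, so the sum is dominated by its first term and is $O(N^{2/3}\log N)$, which is subsumed by both target bounds. Stage 3 classically searches an $O(1)\times O(1)$ region, contributing $O(1)$ queries.

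The main obstacle is the subphase of stage 1 where $N\le|\tilde Y_s|<N^2$ when $M\ge N^2$: since $s$ counts depth cumulatively from the start of stage 1, by the time this subphase begins the amplification factor has already reached $\Theta(\log(M/N^2))$. The bookkeeping ultimately reduces to verifying $N\cdot(\log(M/N^2)+1)\log N = O(M^{1/2}\log N)$, which follows from the elementary inequality $c=O(2^{c/2})$ applied with $c=\log(M/N^2)\ge 0$. Once this is handled, adding the three stage bounds yields $O((NM)^{1/3}\log N)$ when $N\le M<N^2$ and $O(M^{1/2}\log N)$ when $M\ge N^2$, matching the theorem.
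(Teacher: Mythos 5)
Your proposal is correct and follows the paper's own proof of this theorem: you analyze Claw\_Search stage by stage, use the one-sided error of Claw\_Detect together with depth-indexed amplification to get bounded overall error, and bound the query cost with arithmetic-geometric sums. The only divergence is in the $M\ge N^2$ bookkeeping: the paper bounds the $s$th cost in stage 1 by the single expression $O\bigl(s\bigl((NM/2^s)^{1/3}+(M/2^s)^{1/2}\bigr)\log N\bigr)$, valid in both regimes of Lemma~\ref{lm:claw-detect}, so each piece is a clean geometric sum and the first piece is absorbed via $(NM)^{1/3}\le M^{1/2}$; this sidesteps the $\log(M/N^2)=O\bigl((M/N^2)^{1/2}\bigr)$ inequality you need after splitting the sum at $|\tilde Y_s|=N^2$, though both derivations are sound.
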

\begin{proof}
We will analyze Claw\_Search in \fig~\ref{fig:claw search}.

When there is no claw, Claw\_Search always outputs the correct answer.
Suppose that there is a claw.
The algorithm may output a wrong answer 
if at least one of the following two cases happens.
In case (1), 
one of $O(\log {M/N})$ runs of step 2.(b) errs;
in case (2), 
one of $O(\log {N})$ runs of step 3.(b) errs.

Without loss of generality, the error probability of Claw\_Detect can
be assumed to be at most 1/3.
The error probability of each single run of step~2.(b).i is 
at most
$\frac{1}{3^{s+2}}$.
The error probability of each run of step~2.(b) is at most
$\frac{2}{3^{s+2}}<\frac{1}{3^{s+1}}$.
The error probability of case (1) is thus at most
$\sum _{s=1}^{\lceil \log M/N\rceil}\frac{1}{3^{s+1}}<\frac{1}{6}$.
The error probability of case (2) is also at most
$\sum _{s=1}^{\lceil \log
  N_1\rceil}\frac{1}{3^{s+1}}<\frac{1}{6}$ by similar calculation.
Therefore, the overall error probability is at most 1/6+1/6=1/3.

We next estimate the number of queries.
If $N\leq M <N^2$,
the size of $\tilde{Y}$ is always at most quadratically different from 
that of $\tilde{X}$.
Thus, the $s$th repetition of step~2 requires 
$O(s(NM/2^s)^{1/3}\log N)$ queries by Lemma~\ref{lm:claw-detect}.
Similarly, the $s$th repetition of step~3 requires 
$O(s(N/2^s)^{2/3}\log N)$ queries by Lemma~\ref{lm:claw-detect}.

The total number of queries is
$$
O
\left(
\sum _{s=1}^{\lceil\log (M/N)\rceil}
\left(
s\left(
N\frac{M}{2^s}
\right)^{1/3}
\log N
\right)
+
\sum _{s=1}^{\lceil\log N\rceil}
\left(
s(N/2^s)^{2/3}\log N
\right)
\right)
=O\left((NM)^{1/3}\log N\right).
$$

If $M \geq N^2$, 
the $s$th repetition of step~2 requires 
$O(s((NM/2^s)^{1/3}+ (M/2^s)^{1/2})\log N)$ by Lemma~\ref{lm:claw-detect}.
Thus, similar calculation gives $O(M^{1/2}\log N)$
queries.
\qed
\end{proof}
We can easily obtain the standard-oracle version of the above
theorem by using Corollary~\ref{cr:claw-detect} instead of Lemma~\ref{lm:claw-detect}.
\begin{corollary}
\label{cr:upper bound of claw}
Let $Q_2({\bf claw}_{\rm finding}(N,M))$ be the number of queries
needed to locate a claw if it exists
for functions
$f:X:=[N]\rightarrow Z$ and $g:Y:=[M]\rightarrow Z$ given as a
standard oracle. Then,
\[
Q_2({\bf claw}_{\rm finding}(N,M))=
\begin{cases}
O\left((NM)^{1/3}\right) & N\leq M <N^2\\
O(M^{1/2}) & M\geq N^2.
\end{cases}
\]
\end{corollary}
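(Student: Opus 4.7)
The plan is to reuse Claw\_Search from Figure~\ref{fig:claw search} verbatim and simply replay the proof of Theorem~\ref{th:upper bound of claw}, with the single change that every invocation of Lemma~\ref{lm:claw-detect} is replaced by Corollary~\ref{cr:claw-detect}. Since a standard oracle can simulate a comparison oracle with $O(1)$ overhead, the algorithm itself need not be modified; only the per-call cost of the Claw\_Detect subroutine drops by a $\log N$ factor.

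The correctness/error analysis is identical to Theorem~\ref{th:upper bound of claw} and I would just quote it: the depth-$s$ amplification (repeating Claw\_Detect $s{+}2$ times in stage~2 and $s{+}3$ times in stage~3) pushes the error at depth $s$ below $1/3^{s+1}$, so the two geometric series over the $O(\log(M/N))$ levels of stage~2 and the $O(\log N)$ levels of stage~3 each contribute at most $1/6$, giving overall error at most $1/3$.

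For the query count, the $s$-th level of stage~2 operates on domains of sizes $\tilde N \le N$ and $\tilde M = M/2^s$. Using Corollary~\ref{cr:claw-detect}, each single Claw\_Detect call at that level costs $O((N\tilde M)^{1/3})$ when $\tilde M < N^2$ and $O(\tilde M^{1/2})$ otherwise, and we repeat $O(s)$ times. Likewise, the $s$-th level of stage~3 operates on domains of size $O(N/2^s)$ and costs $O(s(N/2^s)^{2/3})$. Stage~4 contributes $O(1)$ classical queries. Summing in the regime $N \le M < N^2$ gives
\[
O\!\left(\sum_{s=1}^{\lceil \log(M/N)\rceil} s\bigl(NM/2^s\bigr)^{1/3} + \sum_{s=1}^{\lceil\log N\rceil} s\bigl(N/2^s\bigr)^{2/3}\right) = O\!\left((NM)^{1/3}\right),
\]
since $\sum_s s \cdot 2^{-s/3}$ converges. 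In the regime $M \ge N^2$, the stage~2 bound becomes $O(s((NM/2^s)^{1/3} + (M/2^s)^{1/2}))$, whose dominating term sums to $O(M^{1/2})$, while stage~3 still contributes $O(N^{2/3}) = O(M^{1/2})$.

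There is essentially no obstacle beyond bookkeeping; the only mild subtlety is checking that replacing the comparison-oracle subroutine by the standard-oracle one does not break the error-amplification accounting (it does not, since the $1/3$ per-call error bound for Claw\_Detect is oracle-agnostic). Hence the two-case bound of the corollary follows immediately from the same computation used in Theorem~\ref{th:upper bound of claw}, with the $\log N$ factor dropped throughout.
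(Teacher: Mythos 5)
Your proposal is correct and matches the paper's approach exactly: the paper's proof of Corollary~\ref{cr:upper bound of claw} is literally the single sentence that one reruns the proof of Theorem~\ref{th:upper bound of claw} with Corollary~\ref{cr:claw-detect} in place of Lemma~\ref{lm:claw-detect}, which is precisely what you do. Your elaboration of the resulting geometric sums and the unchanged error-amplification accounting is accurate.
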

Similarly, we can find a $k$-claw by using $k$-Claw\_Detect as a subroutine.
First, we find $O(N_1)$-sized subset $X'_i$ of $X_i$ for every
$i\in [2.k]$
such that there is a $k$-claw in $X_1\times X'_2 \times \dots \times X'_k$,
by performing $2^{k-1}$-ary search over $X'_i$'s for all $i\in [2.k]$ with $k$-Claw\_Detect.
Let $X_1':=X_1$.
We then perform $2^{k}$-ary search 
over $X'_i$s for all $i\in [k]$ with $k$-Claw\_Detect
to find $O(1)$-sized subset $X''_i$ of $X'_i$ for every
$i\in [k]$
such that there is a $k$-claw in $X''_1\times \dots \times X''_k$.
Finally, we search $X''_1\times \dots \times X''_k$ for a $k$-claw
by issuing classical queries.
A more precise description of 
the algorithm, $k$-Claw\_Search, is given in \fig~\ref{fig:k-claw search}.
\begin{figure}[htb]
\begin{center}
\hrulefill\\
  \textbf{Algorithm $k$-Claw\_Search}
\vspace{-2mm}
\begin{description}
  \item[Input:] $k$ integers $N_1,\dots ,N_k$ such that $N_i\leq
    N_j$ if $i<j$.\\
Comparison oracle $O_{f_1,\dots, f_k}$ for
    functions
$f_i:X_i\rightarrow Z$ such
that $X_i:=[N_i]$  for every $i\in [k]$.\\
  \item[Output:] $k$-claw $(x_1, \dots, x_k)\in X_1\times \dots
    \times X_k$ such that $f_i(x_i)=f_j(x_j)$ for every $i,j\in [k]$ if
    it exists; otherwise $(-1,\dots ,-1)$.
\end{description}
\vspace{-5mm}
\begin{enumerate}
\item Set $\tilde{X_i}:=X_i$ for every $i\in [k]$.
  \item  Set $s:=1$, and repeat the next steps until $u_i-l_i\leq |\tilde{X}_1|$
  for all $i\in [2.k]$,
  where $u_i$ and $l_i$ are the largest and
  smallest values, respectively, in $\tilde{X}_i$.
  \begin{enumerate}
   \item For every $i\in [2.k]$, set $\Xi_i:=\{ [l_i.u_i]\}$ if
     $u_i-l_i\leq |\tilde{X}_1|$, and\\ otherwise, set $\Xi_i:=\{ [l_i.m_i-1],[m_i.u_i]\}$ where $m_i:=\lceil (l_i+u_i)/2\rceil$.
   \item For every tuple $(\tilde{X}'_1,\tilde{X}'_2,\dots , \tilde{X}'_k)\in
     \{\tilde{X_1}\}\times \Xi_2\times \dots \times \Xi_k$, do the following.\\
     If all the tuples are examined, output $(-1,\dots ,-1)$ and halt.
     \begin{enumerate}
     \item Apply $k$-Claw\_Detect $(s+1)+\ceil{\log _3 2^{k-1}}$ times to
       the $k$ functions $f_i$ restricted to domains $\tilde{X}'_i$, respectively, for
       every $i\in [k]$. 
     \item If at least one of the $(s+1)+\ceil{\log _3 2^{k-1}}$
       results is ``true,'' set $\tilde{X}_i:=\tilde{X}'_i$ for every $i\in
      [2.k]$, and break (leave (b)).
     \end{enumerate}
   \item Set $s:=s+1$.
  \end{enumerate}
  \item Set $s:=1$, and repeat the next steps until $u_i-l_i\leq c$
  for all $i\in [k]$ and some constant $c$, say, 100, where $u_i$ and $l_i$ are the largest and
  smallest values, respectively, in $\tilde{X}_i$.
  \begin{enumerate}
   \item For every $i\in [k]$, set $\Xi_i:=\{ [l_i.u_i]\}$ if
     $u_i-l_i\leq c$, and\\ otherwise, set $\Xi_i:=\{ [l_i.m_i-1],[m_i.u_i]\}$ where $m_i=\lceil (l_i+u_i)/2\rceil$.
   \item For every tuple $(\tilde{X}'_1,\tilde{X}'_2,\dots , \tilde{X}'_k)\in
     \Xi_1\times \dots \times \Xi_k$, do the following.\\
     If all the tuples are examined, output $(-1,\dots ,-1)$ and halt.
     \begin{enumerate}
     \item Apply $k$-Claw\_Detect $(s+1)+\ceil{\log _3 2^{k}}$ times to
       the $k$ functions $f_i$ restricted to domains $\tilde{X}'_i$ for
       every $i\in [k]$. 
     \item If at least one of the $(s+1)+\ceil{\log _3 2^{k}}$ results
       is ``true,'' set $\tilde{X}_i:=\tilde{X}'_i$ for every $i\in
      [k]$, and break (leave (b)).
     \end{enumerate}
   \item Set $s:=s+1$.
  \end{enumerate}
\item Classically search $\tilde {X}_1\times \dots \times \tilde{X}_k$ for a $k$-claw.
\item Output $k$-claw $(x_1,\dots ,x_k)\in X'_1\times \dots
  \times X'_k$ if it exists; otherwise output $(-1,\dots ,-1)$.
\end{enumerate}
\vspace{-6mm}
\hrulefill
\end{center}
\vspace{-1\baselineskip}
\caption{Algorithm $k$-Claw\_Search}
\label{fig:k-claw search}
\end{figure}

\begin{theorem}
\label{upper bound of k-claw finding}
For any positive integer $k>1$, let $Q_2({\bf k\mbox{-}claw}_{\rm finding}(N_1,\dots ,N_k))$ 
be the number of queries
needed to locate a $k$-claw if it exists
for $k$ functions
$f_i:X_i:=[N_i]\rightarrow Z\ (i\in [k])$ given as a
comparison oracle, where $N_i\leq N_j$ if $i<j$.
If $k$ is constant,
\[
Q_2({\bf k\mbox{-}claw}_{\rm finding}(N_1,\cdots ,N_k))
=
  \begin{cases}
O\left(\left(\prod_{i=1}^k N_i\right)^\frac{1}{k+1}\log N_1\right) & \mbox{if $\prod_{i=2}^{k}N_i=O(N_1^k)$},\\
O\left(\sqrt{{\prod_{i=2}^k N_i}/{N_1^{k-2}}}\log N_1\right)
&\mbox{otherwise.}
  \end{cases}
\]
\end{theorem}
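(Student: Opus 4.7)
The plan is to mirror the proof of Theorem~\ref{th:upper bound of claw}, analyzing the algorithm $k$-Claw\_Search of Figure~\ref{fig:k-claw search} in two parts: first a bounded-error correctness argument, then a query-count accounting. The algorithm has three phases: a $2^{k-1}$-ary search over the larger domains $X_2,\ldots,X_k$ that shrinks each to size at most $|\tilde{X}_1|=N_1$; a $2^{k}$-ary search over all $\tilde{X}_i$ simultaneously that shrinks each to constant size; and a final classical enumeration that contributes $O(1)$ queries. Every decision in the two search phases is made by repeatedly invoking $k$-Claw\_Detect and costed using Lemma~\ref{lm:k-claw-detect}.

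For correctness, a single $k$-Claw\_Detect invocation has one-sided error at most $1/3$. At depth $s$ of phase~2, the $(s+1)+\lceil\log_3 2^{k-1}\rceil$ repetitions per branch cut this probability to at most $3^{-(s+1)}\cdot 2^{-(k-1)}$, and a union bound over the at most $2^{k-1}$ branches examined at that node gives error at most $3^{-(s+1)}$ per depth. Summing the geometric series over $s\ge 1$ yields total phase-2 error below $1/6$. The identical argument applied to phase~3, with branching factor $2^{k}$ and extra repetitions $\lceil\log_3 2^{k}\rceil$, gives another $1/6$, so the overall error is bounded by $1/3$ as required.

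For the query count, I would invoke Lemma~\ref{lm:k-claw-detect} at each node. In phase~2, at depth $s$ the effective domain size of dimension $i\ge 2$ is $\tilde{N}_i(s)\le \max\{N_1,N_i/2^s\}$ while $\tilde{N}_1(s)=N_1$, and the number of detector invocations is $O(s)$. The per-depth cost from Lemma~\ref{lm:k-claw-detect} decays geometrically in $s$ (by a factor of at least $2^{1/(k+1)}$ in the first case and $2^{1/2}$ in the second), so the sum $\sum_s s\cdot(\text{cost at depth }s)$ converges and is dominated by its $s=1$ term, which is exactly of the order claimed in the theorem. Phase~3 is easier: all effective sizes are at most $N_1/2^s$, so only the first case of Lemma~\ref{lm:k-claw-detect} applies, and the resulting geometric sum gives $O(N_1^{k/(k+1)}\log N_1)$, which is absorbed by the phase-2 bound since $N_1\le N_i$ implies $N_1^{k/(k+1)}\le (\prod_i N_i)^{1/(k+1)}$.

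The main obstacle is the case distinction in Lemma~\ref{lm:k-claw-detect}: as $s$ grows during phase~2, the tuple $(\tilde{N}_1(s),\ldots,\tilde{N}_k(s))$ may cross the boundary $\prod_{i\ge 2}\tilde{N}_i=\Theta(\tilde{N}_1^{k})$, and individual $\tilde{N}_i(s)$ cap at $N_1$ at different depths, slowing the per-dimension decay. Handling this cleanly requires splitting the sum at the crossing depth and exploiting monotonicity of both branches of Lemma~\ref{lm:k-claw-detect} in the $\tilde{N}_i$'s; since within each regime the cost still decays geometrically (at least one dimension is being halved until phase~2 terminates), each piece of the sum is dominated by its initial term, and both pieces combine to match the $s=1$ evaluation of Lemma~\ref{lm:k-claw-detect} at the original sizes $(N_1,\ldots,N_k)$. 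The hypothesis that $k$ is constant is essential so that the $2^{k}$ branching factors and the $\lceil\log_3 2^{k}\rceil$ additive repetition counts are absorbed into the $O(\cdot)$ notation.
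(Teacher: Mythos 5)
Your proposal is correct and takes the same route the paper intends: the paper gives no explicit proof of Theorem~\ref{upper bound of k-claw finding}, relying on the reader to transpose the error-amplification and geometric-sum analysis from the proof of Theorem~\ref{th:upper bound of claw} to the algorithm $k$-Claw\_Search of \fig~\ref{fig:k-claw search}, with Lemma~\ref{lm:k-claw-detect} replacing Lemma~\ref{lm:claw-detect}. Your explicit handling of the regime crossing in Lemma~\ref{lm:k-claw-detect} as $s$ grows (splitting the sum at the transition depth and checking that both branches decay geometrically and agree at the boundary) fills in a detail the paper leaves implicit, but it is the same decomposition and the same bookkeeping, not a different argument.
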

We can easily obtain the standard-oracle version of the above
theorem by using Corollary~\ref{cr:k-claw-detect} instead of Lemma~\ref{lm:k-claw-detect}.
\begin{corollary}
For any positive integer $k>1$, let $Q_2({\bf k\mbox{-}claw}_{\rm finding}(N_1,\dots, N_k))$ 
be the number of queries needed
to locate a $k$-claw if it exists
for $k$ functions
$f_i:X_i:=[N_i]\rightarrow Z\ (i\in[k])$ given as a
standard oracle, where $N_i\leq N_j$ if $i<j$.
If $k$ is constant,
\[
Q_2({\bf k\mbox{-}claw}_{\rm finding}(N_1,\cdots,
N_k))=
  \begin{cases}
O\left(\left(\prod_{i=1}^k N_i\right)^\frac{1}{k+1}\right) & \mbox{if $\prod_{i=2}^{k}N_i=O(N_1^k)$},\\
O\left(\sqrt{{\prod_{i=2}^k N_i}/{N_1^{k-2}}}\right)
&\mbox{otherwise.}
  \end{cases}
\]
\end{corollary}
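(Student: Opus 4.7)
The plan is to argue that the proof of Theorem~\ref{upper bound of k-claw finding} (the comparison-oracle version) goes through verbatim once the internal subroutine call is swapped. Concretely, I would run the same algorithm $k$-Claw\_Search of \fig~\ref{fig:k-claw search} but, whenever it invokes $k$-Claw\_Detect, use the standard-oracle implementation whose complexity is bounded by Corollary~\ref{cr:k-claw-detect} rather than by Lemma~\ref{lm:k-claw-detect}. Since the standard oracle can simulate a comparison oracle with $O(1)$ queries, nothing in the high-level control flow of $k$-Claw\_Search needs to change, and the correctness analysis (error amplification at depth $s$ by $(s+1)+\lceil \log_3 2^k\rceil$ repetitions, union-bounded across the two stages of depths $O(\log(N_k/N_1))$ and $O(\log N_1)$, respectively) is identical.

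The only thing that changes is the per-call query cost of the subroutine. For the regime $\prod_{i=2}^{k}N_i = O(N_1^k)$, the $s$th level of stage~2 restricts each $\tilde X_i$ ($i\geq 2$) to a subset of size roughly $N_i/2^s$, so by Corollary~\ref{cr:k-claw-detect} each repetition costs $O\bigl((N_1\prod_{i=2}^k N_i/2^{s(k-1)})^{1/(k+1)}\bigr)$, and the $s$th level of stage~3 (where all domains get halved) costs $O\bigl((\prod_{i=1}^k N_i/2^{sk})^{1/(k+1)}\bigr)$. Multiplying by $O(s)$ for amplification and summing the resulting geometric series in $s$ yields $O\bigl((\prod_{i=1}^k N_i)^{1/(k+1)}\bigr)$, since the factor of $s$ is absorbed by the geometric decay in $2^{-s/(k+1)}$ and the stage-3 contribution is of lower order. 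The final classical sweep over the $O(1)$-sized residual domains costs $O(1)$.

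For the regime $\prod_{i=2}^{k}N_i = \Omega(N_1^k)$, Corollary~\ref{cr:k-claw-detect} gives the alternate bound $O\bigl(\sqrt{\prod_{i=2}^k N_i/N_1^{k-2}}\bigr)$ per call, and the analogous sum over $s$ of $O(s)$-amplified, halved-domain costs telescopes to the claimed $O\bigl(\sqrt{\prod_{i=2}^k N_i/N_1^{k-2}}\bigr)$ bound; the $\log N_1$ factor present in Theorem~\ref{upper bound of k-claw finding} simply disappears because it entered only through the per-call cost in Lemma~\ref{lm:k-claw-detect}.

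The only mildly delicate point — and the step I expect to verify most carefully — is that the geometric sums in both stages really do dominate at their base term and that the transition at the boundary $\prod_{i=2}^k N_i = \Theta(N_1^k)$ between the two formulas is smooth (the two expressions agree up to constants there, which is what lets us pick the correct $l_i$ in either regime without incurring a logarithmic penalty). Once that bookkeeping is done, the corollary follows immediately from Theorem~\ref{upper bound of k-claw finding}'s template with Corollary~\ref{cr:k-claw-detect} in place of Lemma~\ref{lm:k-claw-detect}.
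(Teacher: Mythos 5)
Your proposal is correct and follows exactly the paper's route: the paper disposes of this corollary in one line, stating that one obtains it from Theorem~\ref{upper bound of k-claw finding} by substituting Corollary~\ref{cr:k-claw-detect} for Lemma~\ref{lm:k-claw-detect} as the cost bound for the $k$-Claw\_Detect subroutine. Your expansion of the geometric-sum bookkeeping is sound (with a minor slip of notation in the stage-3 term, where the relevant sizes are all $O(N_1)$ by that point), but it is the same argument the paper intends.
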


\end{document}